\begin{document}

\title{Analysis of the Game-Theoretic Modeling of Backscatter Wireless Sensor Networks under Smart Interference}


\author{Seung Gwan Hong,~\IEEEmembership{Student Member,~IEEE}, Yu Min Hwang,~\IEEEmembership{Member,~IEEE},\\ Sun Yui Lee,~\IEEEmembership{Student Member,~IEEE}, Yoan Shin, Dong In Kim,~\IEEEmembership{Senior Member,~IEEE}, and Jin Young Kim$^\dagger$,~\IEEEmembership{Senior Member,~IEEE}
\thanks{This technical report is a detailed version of a paper which was accepted to appear in IEEE Communications Letters. Copyright belongs to IEEE.}
\thanks{This work was, in part, supported by the National Research Foundation of Korea (NRF) grant funded by the Korean government (MSIT) (2014R1A5A1011478), and in part by “Human Resources Program in Energy Technology” of the Korea Institute of Energy Technology Evaluation and Planning (KETEP) grant funded by the Korean government (MOTIE) (No. 20174010201620)}
\thanks{S. G. Hong, Y. M. Hwang, S. Y. Lee and J. Y. Kim are with the Department of Wireless Communications, Kwangwoon University, Seoul 01897, South Korea (email: mygwan112, yumin, sunyuil22, jinyoung@kw.ac.kr). }
\thanks{Y. Shin is with the School of Electrical Engineering, Soongsil University, Seoul 06978, South Korea (email: yashin@ssu.ac.kr).}
\thanks{D. I. Kim is with the School of Information and Communication
Engineering, Sungkyunkwan University, Suwon 16419, South Korea (email: dikim@skku.ac.kr).}
\thanks{$^\dagger$ Corresponding author}
}

\markboth{[Technical Report]}%
{Shell \MakeLowercase{\textit{et al.}}: Bare Demo of IEEEtran.cls for Journals}

\maketitle

\begin{abstract}
In this paper, we study an interference avoidance scenario in the presence of a smart interferer which can rapidly observe the transmit power of a backscatter wireless sensor network (WSN) and effectively interrupt backscatter signals. We consider a power control with a sub-channel allocation to avoid interference attacks and a time-switching ratio for backscattering and RF energy harvesting in backscatter WSNs. We formulate the problem based on a Stackelberg game theory and compute the optimal transmit power, time-switching ratio, and sub-channel allocation parameter to maximize a utility function against the smart interference. We propose two algorithms for the utility maximization using Lagrangian dual decomposition for the backscatter WSN and the smart interference to prove the existence of the Stackelberg equilibrium. Numerical results show that the proposed algorithms effectively maximize the utility, compared to that of the algorithm based on the Nash game, so as to overcome smart interference in backscatter communications. 
\end{abstract}

\begin{IEEEkeywords}
Backscatter communications, smart interference, wireless sensor networks, Stackelberg game.
\end{IEEEkeywords}


\section{Introduction}
Recently backscatter wireless sensor networks (WSNs), where a massive number of distributed sensors simultaneously perform backscatter communications for information transmission and radio frequency (RF) energy harvesting, have been attracting the upsurge of interests \cite{intro_rf, intro_rf2}.

However, backscatter WSNs are very vulnerable to interference because of a very low signal strength in their backscattering signals. Especially, in a smart interfering environment where backscatter signals are detected and attacked by the intentional smart interferer, backscatter sensors are required to establish transmission strategies to overcome or avoid the smart interference. Therefore, the motivation of this paper is to solve the problem of how to overcome the smart interference in backscatter WSNs and obtain the best utility of backscatter WSNs. The various previous works, such as utilizing harvest-then-transmit protocol \cite{WPCN1, WPCN2}, a Nash game theory-based jamming defense with power control \cite{Jamming} and medium access control game \cite{Mac_game}, and a Stackelberg game-based smart jamming problem \cite{Stackelberg_Jamming1} by a smart jammer that throws threats on cognitive radio networks with power control \cite{Stackelberg_Jamming2}, have been studied. However, optimal transmission strategy in the smart interference environment that maximizes its own utility, while spying on each other's transmission strategy in backscatter communications with RF energy harvesting, has not yet been investigated.

In this paper, we formulate the smart interference problem based on the Stackelberg game theory as a hierarchical structure-based leader-follower game. Our game model is that two players (a smart interferer and a backscatter WSN) are alternately optimizing their utilities as they explore their opponents' transmission strategies \cite{Stackelberg}. The smart interferer observes the backscatter WSN’s signal power and calculates optimal interference power to reduce the signal-to-interference-plus-noise ratio (SINR) of the backscatter WSN considering the smart interferer’s own battery energy. Then the backscatter WSN observes the smart interferer’s interference power and shifts the current sub-channel to a different sub-channel with less interference while performing power control. Therefore, we try to solve the smart interference problem with the hierarchical structure-based Stackelberg game theory rather than the simultaneous structure-based Nash game theory.

The contribution of this paper is to propose a new transmission strategy based on Stackelberg game theory and Lagrangian dual decomposition that effectively improves the utility of backscatter WSNs in the smart interference environment. To overcome the smart interference in backscatter WSNs, the proposed transmission strategy is composed of two resource allocation algorithms allocating optimal transmit power, time-switching ratio, and sub-channel allocation parameter under practical constraints. Furthermore, with the proposed algorithms, we prove the existence of the Stackelberg equilibrium which means a converged value that can no longer increase the utility of backscatter WSNs.

\subsection{Operation of Backscatter WSN under Smart Interferer}
As shown in Fig. \ref{fig:model}, backscatter tags can harvest energy from transmitted signals from a hybrid-access point (H-AP) and reflect the signal to transmit its own backscatter data signal back to the H-AP. We consider a backscatter WSN consisting of $n$=\{1,$\cdots$,$N$\} tags, and $k$=\{1,$\cdots$,$K$\} represents sub-channels which are shifted frequencies for backscattering \cite{BC_Shift}. The backscatter tags are equipped with a single antenna and operated by time-division multiple access (TDMA). We consider a transmission protocol operating within one block transmission time $T$ and the time-switching ratio allocated to time for energy harvesting as $(1-\rho)T$, and allocated to time for backscattering as $\rho T$. Hence, the amount of the harvested energy at the $n$-th tag can be represented as:
\begin{equation}\label{eq:E_n}
\begin{split}
E_n=\eta (1-\rho)Th_nP_{t,n}
\end{split},
\end{equation}
where $P_{t,n}$ is the transmit power at the H-AP and $\eta$ denotes the energy harvesting efficiency for all tags. $h_n$ is channel gain from the H-AP to the $n$-th tag, which is defined as: 
\begin{equation}\label{eq:h_n}
\begin{split}
h_n=G_tG_r\lambda^2_B/(4\pi r_n)^2
\end{split}.
\end{equation}
\begin{figure}[t]
\centering
\includegraphics[width=6.5in]{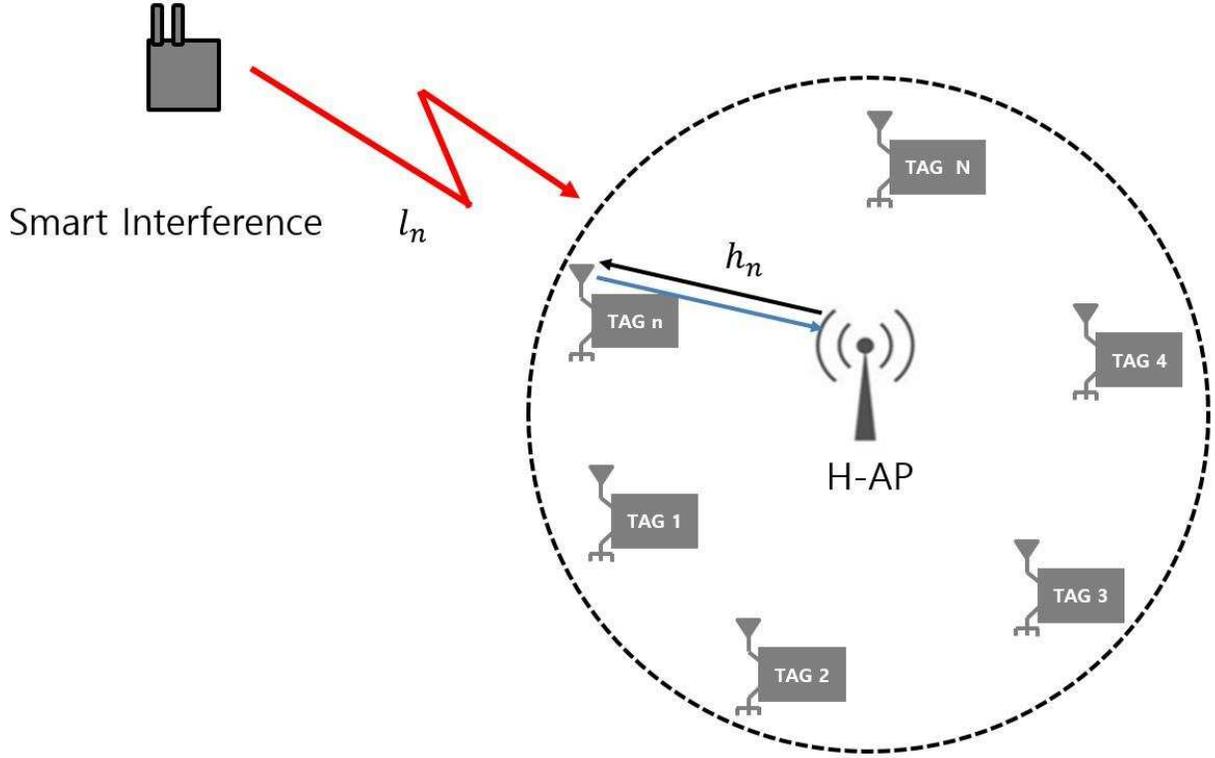}
\caption{The backscatter WSN under the smart interferer.}
\label{fig:model}
\end{figure}
Here, $G_t$ is the antenna gain of the H-AP and $G_r$ the antenna gain of the tag. $\lambda_B$ is the wavelength of the signal transmitted from the H-AP. $r_n$ is the distance between the H-AP and the $n$-th tag. After energy harvesting during the time $(1-\rho)T$, the tag performs backscattering to transmit its own data using the harvested energy during the time $\rho T$. The average transmit power for the $n$-th tag can be represented as:
\begin{equation}\label{eq:P_{B_,n}}
\begin{split}
P_{B,n} = E_n/(\rho T*t_n)
\end{split},
\end{equation}
where $t_n$ is a time slot equally allocated among users for TDMA systems. Hence, we can obtain the received SINR for the $n$-th tag at the receiver as \cite{AD_BC}:
\begin{equation}\label{eq:SINR_{k_,n}}
\begin{split}
SINR_{k_,n} = \frac{\delta_{k,n}P_{B,n}^{re}}{P_Il_n+N_B}=\frac{\delta_{k,n} P_{B_,n}h_n\mid\Gamma_0-\Gamma_1\mid^2}{P_Il_n+N_B}
\end{split},
\end{equation}
where $P^{re}_{B_,n}$ is the transmit power of the $n$-th backscatter tag, $\Gamma_0$ and $\Gamma_1$ are the reflection coefficients \cite{AD_BC}.
$N_B$ is noise power, which is assumed that the noise is negligible compared to the interference caused by the smart interferer. $\delta_{k,n}\in\{1,0\} $ the binary indicator allocating the $k$-th kub-channel to the H-AP, and $P_I$ the interference power at the interferer. $l_n $ is the channel gain from the interferer to the tags, which is defined as:
\begin{equation}\label{eq:l_n}
\begin{split}
l_n=G_iG_t\lambda^{2}_{j}/(4\pi r_j)^2 
\end{split}.
\end{equation}
Here, $G_i$ is the antenna gain of the smart interferer, $\lambda_j$ is the wavelength of the signal transmitted from the smart interferer, and $r_{j,n}$is the distance from the smart interferer to the $n$-th tag.
\subsection{Optimization Problem Formulation Based on Stackelberg Game Model}
In our game model, the smart interferer as a follower observes the transmit power of the backscatter WSN and maximizes its utility, then the backscatter WSN as a leader observes the interference power of the follower and maximizes its utility. Based on the game model, we formulated Stackelberg game as $\mathcal{G}=(\mathcal{N,A,U})$, where $\mathcal{N}$ is the set of players comprised of the smart interferer and the backscatter WSN. $\mathcal{A}$ is the constraint set, and $\mathcal{U}$ the utility set. In P1, the utility of the smart interferer, $\mathcal{U}_I$, is defined as how much smart interferer can reduce the SINR of the backscatter WSN under the smart interferer’s battery limit. As a mobile device, the smart interferer cannot enhance the interference effect by increasing the interference power indefinitely, and should interfere considering its limited battery power. To solve P1, we have assumed that the smart interferer can instantaneously observe and determine the transmit power of the backscatter WSN. $P^{re}_{B,n}$, and sub-channel in which backscatter signal is present, $ \delta_{k_,n}$. Further, we assumed that other parameter are known. In P2, $\mathcal{U}_B$ is the utility of the backscatter WSN to maximize the SINR with the least transmit power and harvest enough power for the backscatter WSN simultaneously. The optimization problem for the two players is formulated as:
\begin{equation}\label{eq:U_I}
\begin{split}
&(\mathsf{P1})~\max_{\substack{P_I \in \mathcal{A}_I}}\mathcal{U}_I(P_I \mid P_{B,n}^{re})\\
&=-\sum_{n=1}^N\frac{\delta_{k,n}\eta((1-\rho)/(\rho t_n))h^{2}_{n}P_{t,n}\mid\Gamma_0-\Gamma_1\mid^2}{P_I l_n+N_B}-C_IP_I,
\end{split}
\end{equation}
\begin{equation}\label{eq:s1}
\begin{split}
&\text{subject to:}~~~~~~\mathcal{A}_I=\{(P_I \mid P_{B,n}^{re})\mid P_I<P_{I,max}\}.~~~~~~~~
\end{split}
\end{equation}
\begin{equation}\label{eq:U_B}
\begin{split}
&(\mathsf{P2})~\max_{\substack{P_{B,n},\rho \in \mathcal{A}_B}}\mathcal{U}_B(P_{t,n},\rho \mid P_{I})\\
&=\mathrm{\sum_{k=1}^K\sum_{n=1}^N}\frac{\delta_{k,n}\eta((1-\rho)/(\rho t_n))h^{2}_{n}P_{t,n}\mid\Gamma_0-\Gamma_1\mid^2}{P_I l_n+N_B}-C_BP_{t,n},
\end{split}
\end{equation}
\begin{equation}\label{eq:s2}
\begin{split}
&\text{subject to:}\mathcal{A}_B=\left\{ (P_{t,n},\rho \mid P_I)\mid \begin{array}{c} 
SINR_{k,n}\geq SINR_{TH}\\
E_n \geq \rho TP_{B,TH}t_n\\
P_{t,n}<P_{t,max}\\
\delta_{k,n}=\{0,1\}\\
\rho = [0,1]\\
E_n>\eta(1-\rho)TP^{EH}_{TH}
\end{array} \right\}
\end{split}.
\end{equation}
Here, $P_{I,max}$ is the maximum interference power of the smart interferer, $P_{t,max}$ the maximum transmit power of the H-AP, and $P_{B,TH}$ the minimum input power of the tag for backscattering \cite{rf_har}. $P^{EH}_{TH}$ is a minimum harvested power for the backscatter WSN \cite{intro_rf2} and $SINR_{TH}$ is a minimum SINR to overcome the smart interferenIce \cite{BC_Shift} and enable reliable backscatter communications. $C_B$ and $C_I$ (unit: per unit power) are used to express the prices (or costs) of unit transmit power and unify the unit of the utility functions \cite{Stackelberg_Jamming1,Stackelberg_Jamming2}.

\section{Utility Optimization Based on Stackelberg Game}
In this section, we propose two utility optimization algorithms for the smart interferer and backscatter WSN to drive the Stackelberg equilibrium of the formulated game $\mathcal{G}$.
\subsection{Smart Interferer’s Best Response Strategy}
To maximally interrupt the backscatter communication of the backscatter WSN under limited energy constraint, we maximize the utility of the smart interferer, as a part of the Stackelberg game along with Largrange dual method.
\newtheorem{theorem}{Theorem}
\begin{theorem}
There exists a unique solution set $\{P_I\}$ to maximize $\mathcal{U}_I (P_I\mid P_{B,n}^{re})$.
\end{theorem}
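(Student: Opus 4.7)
The plan is to set up the Lagrangian dual for the single inequality constraint in $\mathcal{A}_I$ and then show that the objective is strictly concave in $P_I$; this, combined with coercivity supplied by the linear cost term, immediately yields both existence and uniqueness of the maximizer.

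First I would form
$$L(P_I,\mu) = \mathcal{U}_I(P_I\mid P_{B,n}^{re}) - \mu(P_I - P_{I,max}),$$
with KKT multiplier $\mu\ge 0$. Abbreviating the nonnegative per-tag coefficient as $A_n := \delta_{k,n}\eta((1-\rho)/(\rho t_n))h_n^2 P_{t,n}|\Gamma_0-\Gamma_1|^2$, direct differentiation yields
$$\frac{\partial \mathcal{U}_I}{\partial P_I} = \sum_{n=1}^N \frac{A_n l_n}{(P_I l_n + N_B)^2} - C_I,\qquad \frac{\partial^2 \mathcal{U}_I}{\partial P_I^2} = -\sum_{n=1}^N \frac{2 A_n l_n^2}{(P_I l_n + N_B)^3} < 0$$
for every feasible $P_I \ge 0$. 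Strict concavity of $\mathcal{U}_I$ in $P_I$ follows at once, and the stationarity condition $\partial L/\partial P_I = 0$ reduces to
$$\sum_{n=1}^N \frac{A_n l_n}{(P_I l_n + N_B)^2} = C_I + \mu,$$
which is to be combined with the complementary slackness $\mu(P_I - P_{I,max})=0$.

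Next I would argue existence and uniqueness. Because $\mathcal{U}_I$ is continuous, strictly concave, and tends to $-\infty$ as $P_I\to\infty$ (the linear penalty $-C_I P_I$ dominates while the rational terms stay bounded), the supremum over the closure of $\mathcal{A}_I$ is attained at a single point $P_I^\star$. Strict monotonicity of the left-hand side of the stationarity equation then pins $P_I^\star$ down unambiguously: either the interior root exists with $\mu=0$, or the maximum is attained on the boundary $P_I^\star = P_{I,max}$ with $\mu>0$. In both cases the solution set $\{P_I^\star\}$ is a singleton, which is exactly the claim.

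The hard part will not be the existence/uniqueness argument itself, which is routine once strict concavity is in hand, but rather that the stationarity equation is a rational equation of effective degree $2N$ and does not admit a closed-form inversion for arbitrary $N$. For the present theorem this is harmless, since uniqueness only relies on the monotonicity of the derivative; however, any subsequent algorithmic extraction of $P_I^\star$---as foreshadowed by the Lagrangian dual decomposition announced at the start of the section---will require a one-dimensional numerical iteration, such as bisection on $P_I$ or a subgradient update on $\mu$, in order to actually compute the smart interferer's best response.
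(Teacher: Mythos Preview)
Your argument is correct and follows essentially the same route as the paper: both set up the Lagrangian for the single inequality constraint and establish strict concavity of $\mathcal{U}_I$ in $P_I$ via the negative second derivative, from which uniqueness of the maximizer follows. The only notable divergence is at the end: the paper proceeds to write a per-tag closed form for $P_I^\star$ (effectively treating the KKT condition one $n$ at a time), whereas you correctly observe that the full stationarity equation $\sum_n A_n l_n/(P_I l_n+N_B)^2 = C_I+\mu$ is a rational equation of degree $2N$ with no general closed-form inversion and would need a one-dimensional numerical step---a point that is in fact more accurate for $N>1$.
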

\newtheorem{proof}{Proof}
\begin{proof}
If the Hessian matrix $\mathbf{H}(\mathcal{U}_I)$ is $negative~definite$, P1 is strictly concave. The Hessian matrix $\mathbf{H}(\mathcal{U}_I)$ is represented:
\end{proof}
\begin{equation}\label{eq:H1}
\begin{split}
\nonumber&\mathbf{H}(\mathcal{U}_I)=[\partial^2\mathcal{U}_I(P_I \mid P^{re}_{B,n})/\partial P^2_I].
\end{split}
\end{equation}
The second order partial derivative of $\mathcal{U}_I(P_I \mid P^{re}_{B,n})$ is:
\begin{equation}\label{eq:P_H1}
\begin{split}
&\frac{\partial^2\mathcal{U}_I(P_I \mid P^{re}_{B,n})}{\partial P^2_I}=-\sum_n^N\frac{2((1-\rho)/\rho)P_{t,n}A_{k,n}l^2_n\rho^2}{(P_I,l_n+N_B)^3},
\end{split}
\end{equation}
where $A_{k,n}=\delta_{k,n}\eta h_n^2\mid\Gamma_0-\Gamma_1\mid^2/t_n$. Therefore,  $\mathbf{H}(\mathcal{U}_I)$ is strictly concave and P1 has a critical point. Then, we can solve P1 by using the Lagrange multiplier method. We can write the Lagrange dual function with the Lagrange multiplier $\zeta_n$ as:
\begin{equation}\label{eq:L_1}
\begin{split}
&L_I(P_I,\zeta_n)=\mathcal{U}_I(P_I \mid P^{re}_{B,n})+\sum_{n=1}^N\zeta_n(P_{I,max}-P_I),
\end{split}
\end{equation}
The Lagrange dual problem of P1 can be formulated as:
\begin{equation}\label{eq:min_L_1}
\begin{split}
&\min_{\substack{\{\zeta_n\}}}\max_{\substack{\{P_I\}}} L_I(P_I,\zeta_n).
\end{split}
\end{equation}
Then, the optimal interference power $P_I^*$ is calculated (details is given in Appendix):
\begin{equation}\label{eq:op_PI}
\begin{split}
&P^*_I=\left[(\sqrt{l_n((1-\rho)/\rho)P_{t,n}A_{k,n}/(C_I+\zeta_n)}-N_B)/l_n\right]^+,
\end{split}
\end{equation}
where $[Z]^+=max\{ Z,0 \} $. As with multilevel water filling, the optimal values are carried out in (13). Moreover, the Lagrange multiplier $\zeta_n$ can be updated by using gradient methods in a distributed manner as:
\begin{equation}\label{eq:LI_P}
\begin{split}
&\zeta_n(T_I+1)=[\zeta_n(T_I)-\omega_1*(P_{I,max}-P^*_I)]^+,
\end{split}
\end{equation}
where the parameter $T_I$ is the number of iterations. The iteration step $\omega_1$ is a positive value like a learning rate to converge the algorithm faster. The proposed maximization algorithm for the smart interferer’s utility is given in Table I, lines 1-4.
\subsection{Backscatter WSN's Optimal Strategy}
In \cite{BC_Shift}, the backscatter WSN can shift a frequency band. We assume that the backscatter WSN conducts frequency shifting to a different sub-channel with less interference when the current sub-channel is under attack from the smart interferer. The binary indicator allocating the $k$-th sub-channel is defined:
\begin{equation}\label{eq:O_D}
\begin{split}
\delta^*_{k,n}=\begin{cases}
1,~\text{if}~P_{t,n}>P_{t,max}~\text{in the $k_{th}$ sub channel}
\\~~~~\text{where } k=\arg \max_{\substack{k}} SINR_{k,n}.\\
0,~otherwise.
\end{cases}
\end{split}
\end{equation}
Then, the backscatter WSN performs the power control on the shifted sub-channel against the smart interferer. Note that harvesting the interference power is neglected by the proposed sub-channel shifting strategy to avoid the interference. The P2 function can be formulated as an optimization problem:
\begin{theorem}
There exists a unique solution set $\{P_{t,n},\rho\}$ to maximize $\mathcal{U}_B (P_{t,n},\rho\mid P_I)$.
\end{theorem}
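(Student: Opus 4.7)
The plan is to mirror the structure of the proof of Theorem~1, adapted to the two-dimensional decision variable $(P_{t,n},\rho)$. First I would form the $2\times 2$ Hessian $\mathbf{H}(\mathcal{U}_B)$ of (8), computing the four second-order partial derivatives: $\partial^2\mathcal{U}_B/\partial P_{t,n}^2$, $\partial^2\mathcal{U}_B/\partial\rho^2$, and the mixed partial $\partial^2\mathcal{U}_B/\partial P_{t,n}\partial\rho$. Strict concavity, and hence uniqueness of the maximizer, would then follow from the standard negative-definiteness test on $\mathbf{H}(\mathcal{U}_B)$ over the feasible region $\mathcal{A}_B$.

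Given concavity, I would write the Lagrangian $L_B(P_{t,n},\rho,\{\mu_i\})$ by attaching a nonnegative multiplier to each of the inequality constraints in (9): the SINR lower bound, the two harvested-energy lower bounds, the peak transmit-power cap, and the endpoint constraints on $\rho$. The sub-channel indicator $\delta_{k,n}$ is decoupled from this Lagrangian by the shifting rule (13). The Lagrange dual problem then has the same $\min_{\{\mu_i\}\ge 0}\max_{P_{t,n},\rho}$ structure as (12). Setting $\partial L_B/\partial P_{t,n}=0$ and $\partial L_B/\partial\rho=0$ yields a pair of coupled KKT stationarity equations; solving them simultaneously, typically by substituting the closed-form expression for $P_{t,n}^*$ into the equation for $\rho^*$, produces best-response expressions in the same water-filling-like form as (13). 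The multipliers would then be updated by a projected gradient step analogous to (14) with a suitable step size $\omega_2>0$, and the limit point is the unique maximizer.

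The main obstacle is the concavity check. Unlike (10), where the single-variable second derivative is manifestly negative, $\mathcal{U}_B$ contains the bilinear coupling $P_{t,n}(1-\rho)/\rho$, which forces a nonzero off-diagonal entry in $\mathbf{H}(\mathcal{U}_B)$ and leaves the diagonal entry in $P_{t,n}$ equal to zero. Establishing negative definiteness therefore requires invoking the feasibility bounds --- most importantly $\rho\in(0,1)$ together with the lower bounds on $P_{t,n}$ implied by $SINR_{TH}$ and $P^{EH}_{TH}$ --- to dominate the off-diagonal contribution. If this direct route fails, a natural fallback is to exploit the separable structure and apply block-coordinate optimization, alternately maximizing over $P_{t,n}$ with $\rho$ fixed and over $\rho$ with $P_{t,n}$ fixed; each subproblem is then a scalar concave program whose unique optimum, together with the compactness of $\mathcal{A}_B$, yields the desired unique solution set $\{P_{t,n}^*,\rho^*\}$.
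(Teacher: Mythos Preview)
Your overall architecture---Hessian test for strict concavity, then Lagrangian with a multiplier per constraint in (9), dual $\min$--$\max$, KKT stationarity, and projected-gradient multiplier updates---matches the paper's route exactly, including the number and role of the multipliers and the water-filling form of the resulting $P_{t,n}^*$ and $\rho^*$.

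There is, however, one substantive gap. You take the Hessian of $\mathcal{U}_B$ as written in (8), i.e.\ with $P_I$ treated as a fixed parameter. Under that reading you are right that the function is affine in $P_{t,n}$, so $\partial^2\mathcal{U}_B/\partial P_{t,n}^2=0$ and negative definiteness cannot hold. The paper does \emph{not} differentiate (8) directly: it first substitutes the follower's best response $P_I^*$ from (13) into $\mathcal{U}_B$, so that the denominator $P_I l_n+N_B$ becomes $\sqrt{l_n((1-\rho)/\rho)P_{t,n}A_{k,n}/(C_I+\zeta_n)}$. After this Stackelberg substitution the leader's objective behaves like $\sqrt{P_{t,n}}$ rather than $P_{t,n}$, which is why the paper's second derivative (16) is strictly negative and contains $C_I+\zeta_n$. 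The same substitution produces the $\sqrt{\,\cdot\,}$ structure in (17)--(18) and underlies the closed forms (21)--(22). Once you make this substitution the ``zero diagonal'' obstacle you flag disappears, and your proposed block-coordinate fallback is unnecessary.
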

\begin{proof}
If the Hessian matrix $\mathbf{H}(\mathcal{U}_B)$ is $negative~definite$, P2 is strictly concave. The Hessian matrix $\mathbf{H}(\mathcal{U}_B)$ is represented as
\end{proof}
\begin{equation}\label{eq:H2}
\begin{split}
\nonumber&\mathbf{H}(\mathcal{U}_B)=\left[\begin{array}{cc}
    \partial^2\mathcal{U}_B/\partial P^2_{t,n} & \partial^2\mathcal{U}_B/\partial P_{t,n}\partial\rho\\
     \partial^2\mathcal{U}_B/\partial P_{t,n}\partial\rho & \partial^2\mathcal{U}_B/\partial \rho^2
\end{array}\right].
\end{split}
\end{equation}
The second-order partial derivative of $\mathcal{U}_B (P_{t,n},\rho\mid P_I)$ is:
\begin{table}[t]
\centering
\begin{tabular}{l c}
\hline
~~~~~~~~~~~~~~~~~~~~~~~~~~~~~~~~~~~~~~$\mathbf{TABLE I}$ \\
MAXIMIZATION ALGORITHM FOR SMART INTERFERER’S UTILITY\\
~~~~~~~~~~~~(LINES 1-4,  PERFORMED BY SMART INTERFERER)\\
1: \textbf{Input}:~$SINR_{TH}$,$P_{B,TH}$,$\eta$,$P_{B,max}$,$P_{I,max}$,$P_{t,n}$,$\rho$,$t_n$,$\nu$,$\zeta_n$,$\alpha_{k,n}$,$\beta_n$,\\
~~~$\mu_n$,$\gamma_{k,n}$,$\omega_1,\cdots,\omega_7>0$,$T$=$T_I$=$T_B$=1.\\
2: \textbf{Compute} the optimal $P^*_I$,$\zeta^*_n$ according to (13) and (14).\\
3: \textbf{If} $\mathcal{U}_I(T)>\mathcal{U}_I(T-1)$, \textbf{Retrun} $P^{T+1}_I$=$P^*_I$,$\zeta^{T+1}_n$=$\zeta^*_n$ and obtain the \\
~~~optimal utility set  $\mathcal{U}_I(T)$.\\
4: \textbf{else} go to line 2 and let $T_I$=$T_I$+1.\\
MAXIMIZATION ALGORITHM FOR BACKSCATTER WSN’S  UTILITY\\
~~~~~~~~~~~~(LINES 5-8,  PERFORMED BY BACKSCATTER WSN)\\
5: \textbf{Input}:~$P^*_I$.~\textbf{IF}$P_{t,n}>P_{t,max}$, \textbf{Shift} to a different sub-channel with 
\\~~~less interference according to (15).\\
6: \textbf{If} $\mathcal{U}_B(T)>\mathcal{U}_B(T-1)$, \textbf{Return}$P^{T+1}_{t,n}$=$P^*_{t,n}$,$\rho^{T+1}$=$\rho^*$,\\
~~~$\alpha^{T+1}_{k,n}$=$\alpha^*_{k,n}$,$\beta^{T+1}_n$=$\beta^*_n$,$\mu^{T+1}_n$=$\mu^*_n$,$\nu^{T+1}$=$\nu^*$,$\tau^{T+1}$=$\tau^*$,$\gamma^{T+1}_{k,n}$=$\gamma^*_{k,n}$\\
~~~and obtain the optimal utility set $\mathcal{U}_B(T)$ and let $T$=$T$+1.\\
7: \textbf{else} Update $P^{*}_{t,n}$=$P_{t,n}(T_B+1)$,$\rho^{*}$=$\rho(T_B+1)$,$\alpha^{*}_{k,n}$=$\alpha_{k,n}(T_B+1)$,\\
~~~$\beta^{*}_{n}$=$\beta_{n}(T_B+1)$,$\mu^{*}_{n}$=$\mu_{n}(T_B+1)$,$\nu^{*}$=$\nu(T_B+1)$,$\tau^{*}$=$\tau(T_B+1)$,\\
~~~$\gamma^{*}_{k,n}$=$\gamma_{k,n}(T_B+1)$~according to (21-28).\\
8: Go to line 6 and let $T_B$=$T_B$+1.\\
\hline
\end{tabular}
\end{table}
\begin{equation}\label{eq:P_U2}
\begin{split}
&\frac{\partial^2\mathcal{U}_B}{\partial P^2_{t,n}}=-\frac{(1-\rho )A_{k,n}}{4}\sqrt{\frac{C_I+\zeta_n}{l_n\rho(1-\rho)A_{k,n}P^3_{t,n}}}
\end{split},
\end{equation}
\begin{equation}\label{eq:P_U3}
\begin{split}
&\frac{\partial^2\mathcal{U}_B}{\partial \rho^2}=\frac{B_{k,n}(4\rho-3)}{4(\rho^4-\rho^3)(A_{k,n}P_{t,n}(\rho-\rho^2)/C_n)^{1/2}}
\end{split},
\end{equation}
\begin{equation}\label{eq:P_U4}
\begin{split}
&\frac{\partial^2\mathcal{U}_B}{\partial P_{t,n}\rho}=\frac{\partial^2\mathcal{U}_B}{\partial \rho P_{t,n}}=-\frac{D_{k,n}(1-2\rho)A_{k,n}(\rho-1)}{4C_n \sqrt{D_{k,n}(\rho-\rho^2)/C_n}}
\end{split},
\end{equation}
\begin{figure}[h]
\centering
\includegraphics[width=4.5in]{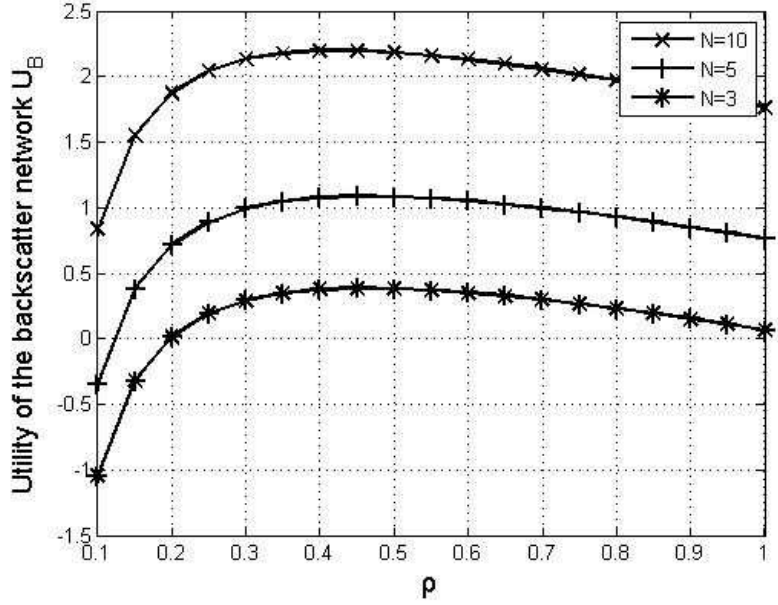}
\caption{Time-switching ratio $\rho$ for an optimum utility of the backscatter WSN}
\label{fig:model}
\end{figure}
\begin{figure}[h]
\centering
\includegraphics[width=4.5in]{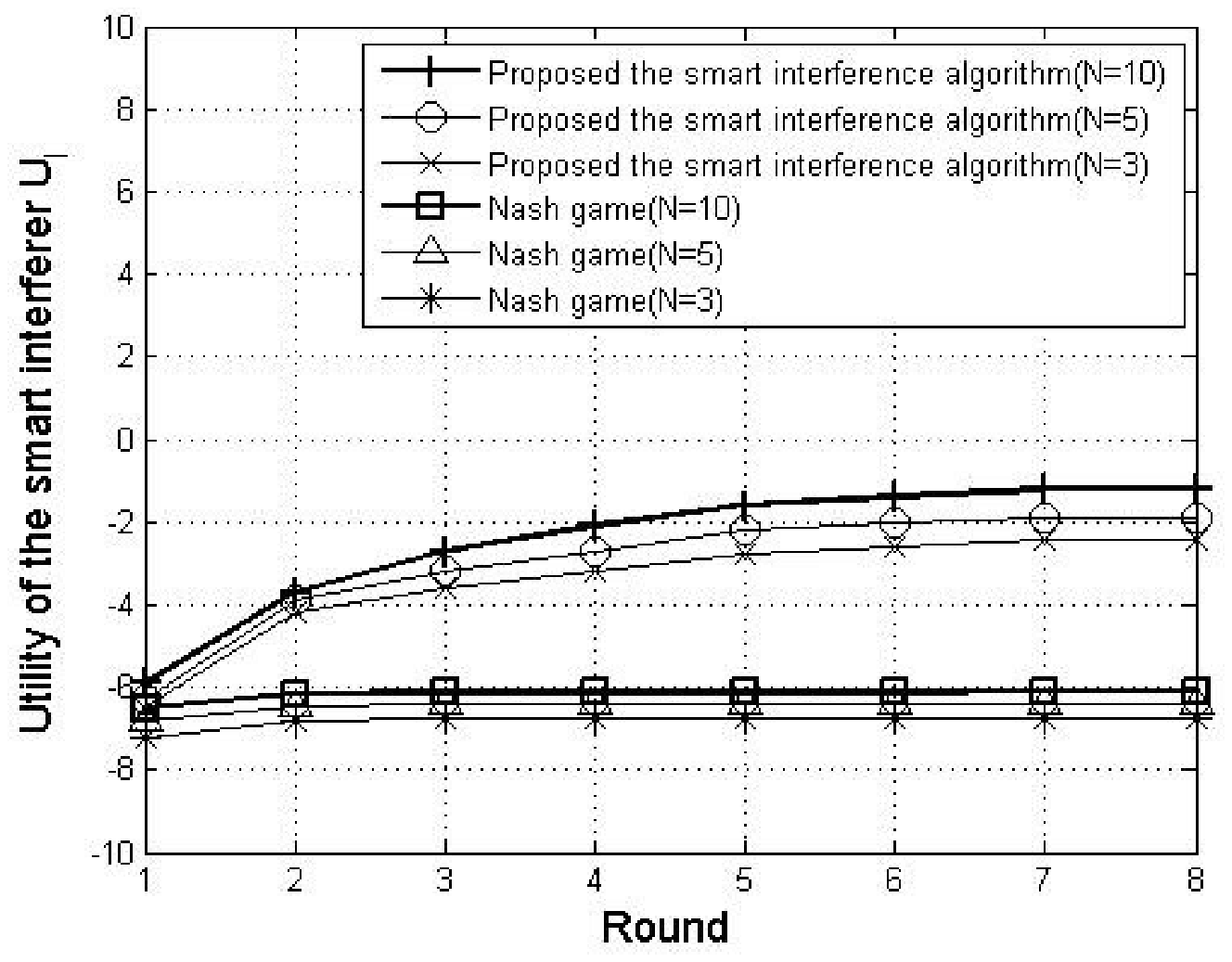}
\caption{Comparison of the smart interferer’s utility between the proposed algorithm and the one based on the Nash game}
\label{fig:model}
\end{figure}
\begin{figure}[h]
\centering
\includegraphics[width=4.5in]{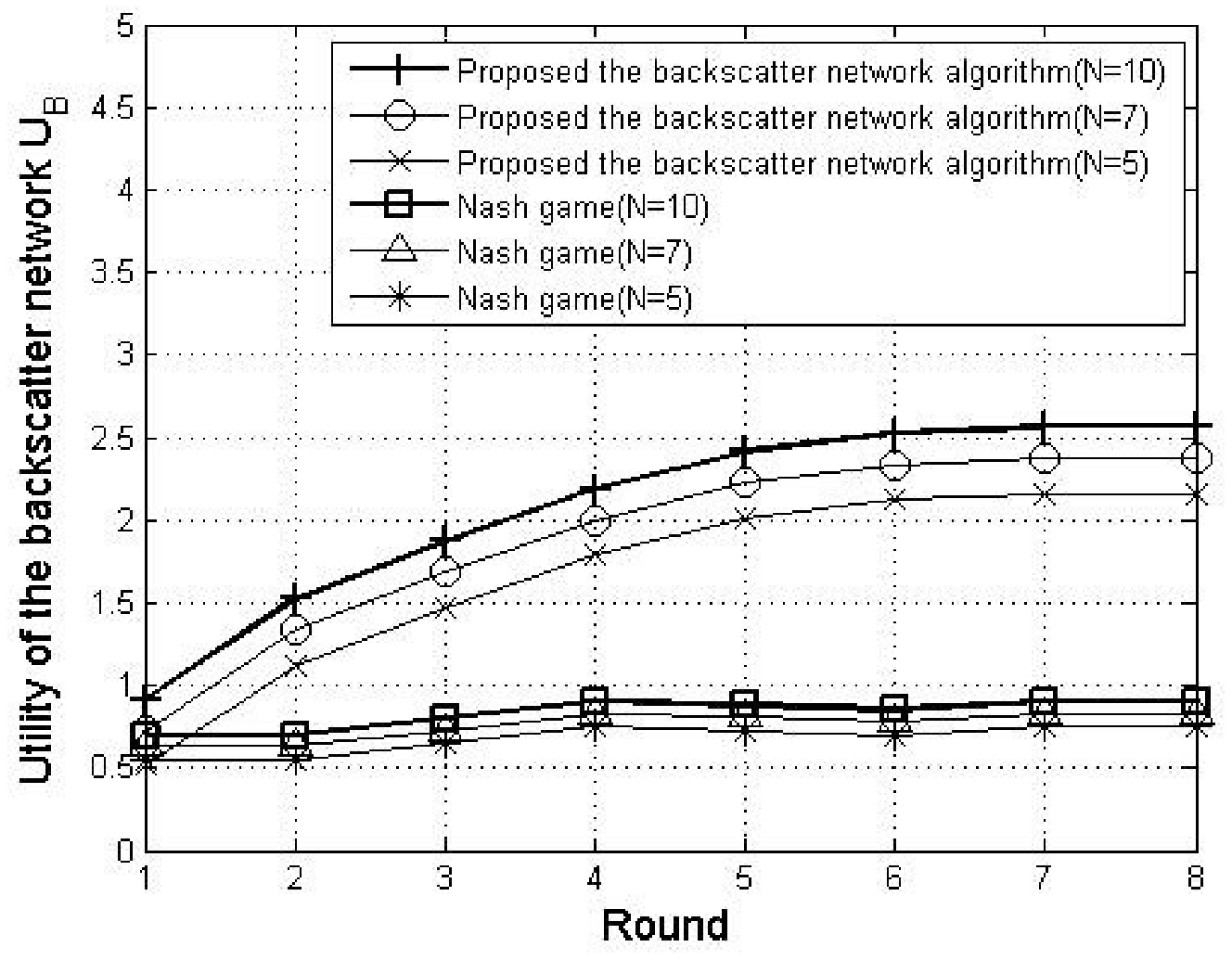}
\caption{Comparison of the backscatter WSN’s utility between the proposed algorithm and the one based on the Nash game}
\label{fig:model}
\end{figure}
where $B_{k,n}=l_n P_{t,n} A_{k,n}$ and $C_n=C_I+\zeta_n$. Therefore, $\mathbf{H}(\mathcal{U}_B)$ is strictly concave and P2 has a critical point. Then, we can solve P2 by using the Lagrange multiplier method. We write the dual Lagrange function of P2 as:
\begin{equation}\label{eq:L_2}
\begin{split}
&L_B(P_{t,n},\rho,\alpha_{k,n},\beta_n,\tau,\mu_n,\nu,\gamma_{k,n})=\mathcal{U}_B(P_{t,n},\rho \mid P_I)+\\
&\sum_{k=1}^K\sum_{n=1}^N \{\alpha_{k,n}(SINR_{k,n}-SINR_{TH})+\gamma_{k,n}(\eta h_nP_{t,n}\\
&-\eta T P^{EH}_{TH})\}+\sum_{n=1}^N\{\beta_n(P_{t,max}-P_{t,n})+\nu(1-\rho)\\
&\tau\rho+\mu_n(E_n-t_n\rho TP_{B,TH})\},
\end{split}
\end{equation}
where $\mu_n$ and $\nu$ are Lagrange multipliers. Therefore, we describe the dual Lagrange problem of P2 as:
\begin{equation}\label{eq:min_L_2}
\begin{split}
\min_{\substack{\{\ \alpha_{k,n},\beta_n,\tau,\mu_n,\nu,\gamma_{k,n}\}}}\max_{\substack{\{P_{t,n},\rho\}}} L_{B,n}(P_{t,n},\rho,\alpha_{k,n},\beta_n,\tau,\mu_n,\nu,\gamma_{k,n}).
\end{split}
\end{equation}
The optimal values are derived as (details are in Appendix):
\begin{equation}\label{eq:op_Pt}
\begin{split}
&P^*_{t,n}=\left[\frac{C_n}{A_{k,n}l_n(1-\rho)}(\frac{(1+\alpha_{k,n})(1-\rho)A_{k,n}}{2E_{k,n}})^2\right]^+,
\end{split}
\end{equation}
\begin{equation}\label{eq:op_rho}
\begin{split}
&\rho^*=\left[-1/4+1/2(G^{1/2}_{k,n}-(1/2-G_{k,n}-G^{-1/4}_{k,n})\right]^+,
\end{split}
\end{equation}
where $D_{k,n}=\eta(\rho-1)h_n(\mu_nT+\gamma_{k,n})-C_B-\beta_n$, $E_{k,n}=\eta h_nTP_{t,n}+t_nTP_{B,TH}+\nu-\tau, ~F_{k,n}=\frac{C_n}{A_{k,n}P_{t,n}}(\frac{(1+\alpha_{k,n})B_{k,n}}{2E_{k,n}})^2 ~\text{and}~ G_{k,n}=1/4-(\sqrt[3]{2}\cdot12F_{k,n}/(F_{k,n}(3\sqrt{768F_{k,n}+81}-27))^{1/3})+(-27F_{k,n}+3F_{k,n}\sqrt{768F_{k,n}+81})^{1/3}/3\sqrt[3]{2}.$ The Lagrange multipliers $\alpha_{k,n},\beta_n,\tau,\mu_n,\nu,\text{ and }\gamma_{k,n}$ are updated by using the gradient methods. The parameter $T_B$ the number of iterations. Constant coefficients $\omega_2,\cdots,\omega_7$ positive iteration steps as:
\begin{equation}\label{eq:LB_P1}
\begin{split}
&\alpha_{k,n}(T_B+1)=[\alpha_{k,n}(T_B)-\omega_2(SINR^*_{k,n}-SINR_{TH})]^+,
\end{split}
\end{equation}
\begin{equation}\label{eq:LB_P2}
\begin{split}
&\beta_{n}(T_B+1)=[\beta_{n}(T_B)-\omega_3(P_{t,max}-P^*_{t,n})]^+,
\end{split}
\end{equation}
\begin{equation}\label{eq:LB_P3}
\begin{split}
&\mu_{n}(T_B+1)=[\mu_{n}(T_B)-\omega_4(E^*_{n}-\rho TP_{B,TH}t_{n})]^+,
\end{split}
\end{equation}
\begin{equation}\label{eq:LB_P4}
\begin{split}
&\tau(T_B+1)=[\tau(T_B)-\omega_5\rho^*]^+,
\end{split}
\end{equation}
\begin{equation}\label{eq:LB_P5}
\begin{split}
&\nu(T_B+1)=[\nu(T_B)-\omega_6(1-\rho^*)]^+,
\end{split}
\end{equation}
\begin{equation}\label{eq:LB_P6}
\begin{split}
&\gamma_{k,n}(T_B+1)=[\gamma_{k,n}(T_B)-\omega_7(\eta h_nP_{t,n}-\eta TP^{EH}_{TH})]^+.
\end{split}
\end{equation}
The proposed maximization algorithm for backscatter WSN’s utility is given in Table I, lines 5-8, and performed at the backscatter WSN.

\section{Results}
For the numerical results, we set parameters as the total number of tags $N$=3, 5 and 10, $P_{B,TH}$=-18dBm \cite{rf_har}, $P^{EH}_{TH}$=-22dBm \cite{intro_rf2}, $SINR_{TH}$=10dB \cite{BC_Shift}, $C_I$=$C_B$=1 [8-9], $\eta$=0.5, $t_{t,n}$=1/$N$, $\Gamma_0$=1, $\Gamma_1$=-1 \cite{AD_BC}, $P_{B,max}$=20dBm \cite{Stackelberg_Jamming1}, $P_{I,max}$=30dBm \cite{Stackelberg_Jamming1}, $G_t$=$G_i$=6dBi, $G_r$=1.8dBi \cite{rf_har}, and $K$=14. The backscatter WSN operates in 2.4 GHz ISM band. We assume a scenario where the backscatter WSN has a coverage up to 5 meters in one office space in a building, and the interferer should be located far enough from the coverage of the backscatter WSN considering that it is physically easy to be detected and removed. On the other hand, to prevent the effect of the interference attack from becoming very small due to too far distance from the backscatter WSN, $r_{j,n}$is set to a proper distance of 10 meters. Based on the mobile-type smart jammer \cite{web} which could act as the smart interferer, the smart interferer is simulated using MATLAB. Fig. 2 shows the time-switching ratio ρ for the optimum utility of the backscatter WSN. We allocate the time-switching $\rho$ for RF energy harvesting using the Lagrange multiplier method. The utility of the backscatter WSN attains the maximum when the time-switching ratio $\rho$ is 0.45. Thus, we can allocate the optimal time-switching ratio for backscattering and RF energy harvesting in the backscatter WSN.

Fig. 3 shows the comparison of the results of the proposed algorithm of the smart interferer with those of the algorithm based on the Nash game. The Nash game does not consider the hierarchical structure, due to which the proposed algorithm of the smart interferer provides an increased utility over that of the algorithm based on the Nash game, which implies the proposed algorithm efficiently interferes with the backscatter WSN.

Fig. 4 demonstrates each game model reaches its own equilibrium where the utility no longer increases and converges, but the Stackelberg game-based curves converge to the higher utility than that of the Nash game-based curves at the last round 8. This means that applying the hierarchical structure-based Stackelberg game theory is a more realistic and appropriate way to solve the smart interference problem. It should be noted that the smart interferer in this paper is assumed to be the smartest one with malicious intention and to perform the proposed interferer’s utility maximization algorithm. If the smart interferer uses another algorithm which would probably be less efficient and uses higher power to interfere, the backscatter WSN's utility could be even higher due to faster energy depletion of the interferer.

\section{Conclusion}
In this paper, we have formulated the smart interference problem in the interaction between the backscatter WSN and the smart interferer. We have proposed two resource allocation algorithms based on the Stackelberg game and proved the existence of the Stackelberg equilibrium. The results showed that the proposed algorithms yield an increased utility over that of the algorithm based on the Nash game.
\appendix
\begin{center}
DERIVATION OF THE OPTIMAL VALUE (13)
\end{center}
By solving the following formula derived by the Karush-Kuhn-Tucker (KKT) conditions, we can simply derive the optimal values of $P_I^*$ :
$\partial L_I(P_I,\zeta_n)/\partial P_I$
\begin{equation}\label{eq:A_L1}
\begin{split}
=\frac{l_nP_{t,n}A_{k,n}(1-\rho)}{\rho(P_Il_n+N_B)^2}-C_I-\zeta_n
\begin{cases}
=0,~P_{I,n}>0,\\
<0,~\text{otherwise.}
\end{cases}
\end{split}
\end{equation}
\begin{center}
DERIVATION OF THE OPTIMAL VALUE (21-22)
\end{center}
The optimal values,$P_{t,n}^*$ and $\rho^*$, are derived by the KKT conditions, respectively, by solving the following equations:
$\partial L_B(P_{t,n},\rho,\alpha_{k,n},\beta_n,\tau,\mu_n,\nu,\gamma_{k,n})/\partial P_{t,n}$
\begin{equation}\label{eq:A_L2}
\begin{split}
=\frac{(1+\alpha_{k,n})(1-\rho)A_{k,n}}{2\sqrt{(1-\rho)B_{k,n}/C_n}}-D_{k,n}
\begin{cases}
=0,~P_{t,n}>0,\\
<0,~\text{otherwise.}
\end{cases}
\end{split}
\end{equation}
$\partial L_B(P_{t,n},\rho,\alpha_{k,n},\beta_n,\tau,\mu_n,\nu,\gamma_{k,n})/\partial\rho$
\begin{equation}\label{eq:A_L3}
\begin{split}
=-\frac{(1+\alpha_{k,n})B_{k,n}}{2\rho^2\sqrt{(1-\rho)P_{t,n}A_{k,n}/\rho C_n}}-E_{k,n}
\begin{cases}
=0,~\rho>0,\\
<0,~\text{otherwise.}
\end{cases}
\end{split}
\end{equation}

\end{document}